\newcommand{\Pol}{\textsf{P}\xspace}
\newcommand{\NP}{\textsf{NP}\xspace}
\newcommand{\NC}{\textsf{NC}\xspace}
\newcommand{\LOGDCFL}{\textsf{LOGDCFL}\xspace}
\newcommand{\AC}{\textsf{AC}\xspace}
\renewcommand{\O}[1]{\ensuremath{\mathcal{O}\left(#1\right)}\xspace}
\newcommand{\N}{\ensuremath{\mathbb{N}}\xspace}
\newcommand{\Z}{\ensuremath{\mathbb{Z}}\xspace}
\newcommand{\ie}{\emph{i.e.}\@\xspace}
\newcommand{\etal}{\emph{et\ al.}\@\xspace}
\newcommand{\myproblem}[3]{\par\noindent\textbf{Problem #1}\\[-3ex]
\begin{description}
  \item[\textsc{Instance}] #2
  \item[\textsc{Question}] #3
\end{description}
\medskip}
\begin{document}

\mainmatter

\title{Computational Complexity of Avalanches in the Kadanoff
  two-dimensional Sandpile Model}

\titlerunning{Comput. Complexity of avalanches in 2-d KSPM}

\author{Eric Goles\dag \and  Bruno Martin\ddag}

\authorrunning{E.~Goles and B.~Martin}

\institute{\dag Universidad Adolfo Iba\~nez, Av. Diagonal Las Torres 2640,
Pe\~nalolen, Santiago, Chile.\\
\ddag University of Nice-Sophia Antipolis, I3S-CNRS,  France.\\
\mailsc\\
}


\toctitle{Lecture Notes in Computer Science}
\tocauthor{Authors' Instructions}
\maketitle

\begin{abstract}
  In this paper we prove that the \emph{avalanche problem} for
  Kadanoff sandpile model (KSPM) is $\Pol$-complete for
  two-dimensions.  Our proof is based on a reduction from the monotone
  circuit value problem by building logic gates and wires which work
  with configurations in KSPM.  The proof is also related to the known
  \emph{prediction problem} for sandpile which is in $\NC$ for
  one-dimensional sandpiles and is $\Pol$-complete for dimension $3$
  or greater. The computational complexity of the prediction problem
  remains open for two-dimensional sandpiles.
\end{abstract}
\section{Introduction}
Predicting the behavior of discrete dynamical systems is, in general,
both the ``most wanted'' and the hardest task.  Moreover, the
difficulty does not decrease when considering finite phase spaces.
Indeed, when the system is not solvable, numerical simulation is the
only possibility to compute future states of the system.

In this paper we consider the well-known discrete dynamical system of
sandpiles (SPM). Roughly speaking, its dynamics is as follows. Consider the
toppling of grains of sand on a (clean) flat surface, one by one. After a
while, a sandpile has formed. At this point, the simple addition of
even a single grain may cause avalanches of grains to fall down along
the sides of the sandpile. Then, the growth process of the sandpile
starts again. Remark that this process can be naturally extended to
arbitrary dimensions although for $d>3$, the physical meaning is not
clear.

The first complexity results about SPM appeared in ~\cite{gm1,gm2}
where the authors proved the computation universality of SPM. For that,
they modelled wires and logic gates with sandpiles
configurations. Inspired by these constructions, C.~Moore and
M.~Nilsson considered the \emph{prediction problem} (PRED) for SPM \ie
the problem of computing the stable configuration (fixed point)
starting from a given initial configuration of the sandpile. C.~Moore
and M.~Nilsson proved that PRED is in $\NC^3$ for dimension $1$ and
that it is \Pol-complete for $d\geq 3$ leaving $d=2$ as an open
problem \cite{moore99}. (Recall that \Pol-completeness plays for
parallel computation a role comparable to \NP-comp\-lete\-ness for
non-deterministic computation. It corresponds to problems which cannot
be solved efficiently in parallel (see~\cite{ghr95}) or, equivalently,
which are \textit{inherently sequential}). Later, P.B.~Miltersen
improved the bound for $d=1$ showing that PRED is in \LOGDCFL
($\subseteq\AC^1$) and that it is not in $\AC^{1-\epsilon}$ for any
$\epsilon>0$ \cite{miltersen07}. Therefore, in any case,
one-dimensional sandpiles are capable of (very) elementary
computations such as computing the max of $n$ bits.

Both C.~Moore and P.B.~Miltersen underline that
\begin{quote}\textit
``having a better
upper-bound than \Pol for PRED for two-dimensional sandpiles
would be most interesting.''
\end{quote}

In this paper, we address a slightly different problem: the avalanche
problem (AP). Here, we start with a monotone configuration of the
sandpile. We add a grain of sand to the initial pile. This eventually
causes an avalanche and we address the question of the complexity of
deciding whether a certain given position --initially with no grain of
sand-- will receive some grains in the future. Like for the
(PRED) problem, (AP) can be formulated in higher dimensions.
In order to get acquainted with AP, we introduce its one-dimensional
version first.
\smallskip

One-dimensional sandpiles can be conveniently represented by a finite
sequence of integers $x_1, x_2, \ldots, x_k, \ldots, x_n$. The
sandpiles are represented as a sequence of \emph{columns} and each
$x_i$ represents the number of grains contained in column $i$. In the
classical SPM, a grain falls from column $i$ to $i+1$ if and only if
the height difference $x_i-x_{i+1}\geq 2$.  Kadanoff's
sandpile model (KSPM) generalizes SPM~\cite{kadanoff89,goles08}
by adding a parameter $p$. The setting is the same except for the
local rule: one grain falls to the $p-1$ adjacent columns if the
difference between column $i$ and $i+1$ is greater than $p$.

Assume $x_k=0$, for a value of $k$ ``far away'' from the sandpile. The
avalanche problem asks whether adding a grain at column $x_1$ will
cause an avalanche such that at some point in the future $x_k\geq 1$,
that is to say that an avalanche is triggered and reaches the ``flat''
surface at the bottom.

This problem can be generalized for two-dimensional sandpiles and is
related to the question addressed by C.~Moore and P.B.~Miltersen.

In this paper we prove that in the two-dimensional case, AP is
\Pol-complete.  The proof is obtained by reduction from the Circuit
Value Problem where the circuit only contains monotone gates --- that
is, AND's and OR's (see section~\ref{sec:2d} for details).
\smallskip

We stress that our proof for the two-dimensional case needs some
further hypothesis/constraints for monotonicity and determinism (see
section~\ref{sec:2d}). If both properties are technical requirements
for the proof's sake, monotonicity also has a physical
justification. Indeed, if KSPM is used for modelling real physical
sandpiles, then the image of a monotone non-increasing configuration
has to be monotone non-increasing since gravity is the only force
considered here. We have chosen to design the Kadanoff automaton for
$d=2$ by considering a certain definition of the three-dimensional
sandpile which does not correspond to the one of Bak's \emph{et al.}
in~\cite{bak88}. This hypothesis is not restrictive. It is just used
for constructing the transition rules. Bak's construction was done
similarly. Nevertheless, our result depends on the way the three
dimensional sandpile is modelled. In our case, we have decided to
formalise the sandpile as a monotone decreasing pile in three
dimensions where $x_{i,j}\geq\max\{x_{i+1,j},x_{i,j+1}\}$ (here
$x_{i,j}$ denotes the sand grains initial distribution) together with
Kadanoff's avalanche dynamics ruled by parameter $p$. The pile $(i,j)$
can give a grain either to every pile $(i+1,j),\ldots,(i+p-1,j)$ or to
every pile $(i,j+1),\ldots,(i,j+p-1)$ if the monotonicity is not
violated. With such a rule and if we use the height difference for
defining the monotonicity, we can define the transition rules of the
automaton for every value of the parameter $p$.

In the case where the value of the parameter $p$ equals $2$, we find
in our definition of monotonicity something similar with Bak's SPM in
two dimensions. Actually, both models are different because the
definitions of the three dimensional piles differ. That is the reason
why we succeed in proving the \Pol-completeness result which remains
an open problem with Bak's definition.
\smallskip

The paper is organized as follows. Section~\ref{sec:defs} introduces
the definitions of the Kadanoff sandpile model in one dimension and
presents the avalanche problem. Section~\ref{sec:2d} generalizes the
Kadanoff sandpile model in two dimensions and presents the avalanche
problem in two dimension, which is proved \Pol-complete for any value
of the Kadanoff parameter $p$. Finally, section~\ref{sec:ccl}
concludes the paper and proposes further research directions.
\section{Sandpiles and Kadanoff model in one dimension}\label{sec:defs}
A sandpile \emph{configuration} is a distribution of sand grains over
a lattice (here \Z). Each site of the lattice is associated with an
integer which represents its sand content. A configuration is
\emph{finite} if only a finite number of sites has non-zero sand
content.  Therefore, in the sequel, a finite configuration on \Z will
be identified with an ordered sequence of integers $x_1,x_2, \ldots,
x_n$ in which $x_1$ (resp. $x_n$) is the first (resp. the last) site
with non-zero sand content. A configuration $x$ is \emph{monotone} if
$\forall i\in\Z$, $x_i\geq x_{i+1}$. A configuration $x$ is
\emph{stable} if $\forall i\in\Z$, $x_i-x_{i+1}< p$ \ie if the
difference between any two adjacent sites is less than Kadanoff's
parameter $p$. Let SM$(n)$ denote the set of stable monotone
configurations of the form $^\omega x_1,x_2,\ldots, x_{n-1},
x_n^\omega$ and of length $n$, for $x_i\in\N$.

Given a configuration $x$, $a\in\N$ and $j\in\Z$, we use the notation
$^\omega ax_j$ (resp. $x_ja^\omega$)
to say that $\forall i\in\Z$, $i<j\rightarrow x_i=a$
(resp. $\forall i\in\Z$, $i>j\rightarrow x_i=a$).

Finally, remark that any configuration $^\omega x_1,x_2,\ldots,
x_{n-1}, x_n0^\omega$ can be identified with its \emph{height
  difference} sequence $^\omega 0,(x_1-x_2), \ldots, (x_{n-1}-x_n), x_n,
0^\omega$ \enspace.  \smallskip

Consider a stable monotone configuration $^\omega x_1,x_2, \dots,
x_n^\omega$. Adding one more sand grain, say at site $i$, may cause
that the site $i$ topples some grains to its adjacent sites. In their
turn the adjacent sites receive a new grain of sand and may also topple,
and so on. This phenomenon is called an \emph{avalanche}. The
avalanche ends when the system evolves to a new stable configuration.

\begin{figure}
  \centering
  \includegraphics[scale=.6]{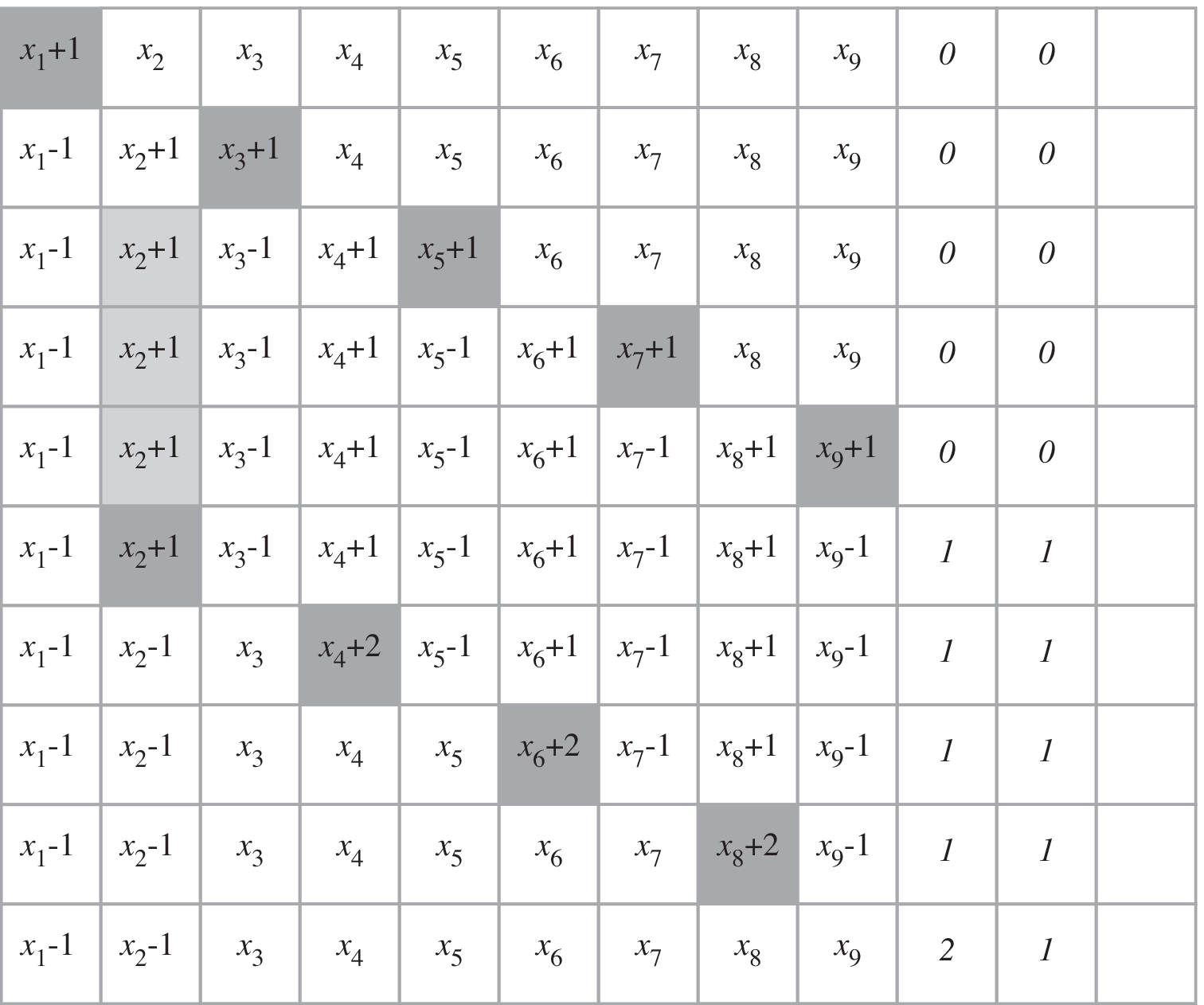}
  \caption{Avalanches for $p=3$ with 9 columns. Here, $x_i+1$
    (resp. $x_i+2$) indicates that column $i$ has received some grains
    once (resp. twice), $x_i-1$ that column $i$ has given some grains
    according to the dynamics; a dark shaded site indicates the
    toppling site, a light shaded site indicates a site that could
    topple in the future. Times goes top-down.}
  \label{fig:avalanche}
\end{figure}

In this paper, topplings are controlled by the \emph{Kadanoff's
  parameter} $p\in\N$ which completely determines the model and its
dynamics. In KSPM$(p)$, $p-1$ grains will fall from site $i$ if
$x_i-x_{i+1}\geq p$ and the new configuration becomes
\[
^\omega x_1\cdots
(x_{i-1})(x_i-p+1)(x_{i+1}+1)\cdots(x_{i+p-2}+1)(x_{i+p-1}+1)(x_{i+p})\cdots
x_n0^\omega
\enspace.
\]
In other words, the site $i$ distributes one grain to each of its
$(p-1)$ right adjacent sites. Equivalently, if we mesure the height
differences after applying the dynamics, we get
$(h_{i-1}+p-1)(h_i-p)(h_{i+1})(h_{i+2})\cdots
(h_{i+p-2})(h_{i+p-1}+1)\enspace,$ where $h_{i-1}=(x_{i-1}-x_i)$ and
all remaining heights do not change. In other words, the height
difference $h_i$ gives rise to an increase of $(p-1)$ grains of sand
to height $h_{i-1}$ and an increase of one grain to height
$h_{i+p-1}$.

We consider the problem of deciding whether some column on the right
of column $x_{n}$ (more precisely for column $x_k$ for $n<k\leq
n+p-1$) will receive some grains according to the Kadanoff's
dynamics. Since the initial configuration is stable, it is not difficult
to prove that avalanches will reach at most the column $n+p-1$
(see figure~\ref{fig:avalanche} for example).


Remark that given a configuration, several sites could topple at the
same time. Therefore, at each time step, one might have to decide
which site or which sites are allowed to topple. According to the
update policy chosen, there might be different images of the same
configuration. However, it is known ~\cite{goles02} that for any given
initial number of sand grains $n$, the orbit graph is a lattice and
hence, for our purposes, we may only consider one decision problem to
formalize AP:

\myproblem{AP}{A configuration $x\in$ SM$(n)$ and $k\in\N$
  s.t. $n\!<k\!\leq n+p-1$}{Does there exist an avalanche such that
  $x_k\geq 1$?}

Let us consider some examples.  Let $p=3$ and consider a stable
bi-infinite configuration such that its height differences is as
follows $^\omega00\underline{2}2022120000^\omega$. We add just one
grain at $x_1$ (the site underlined in the configuration).  Then, the
next step is $^\omega02\underline{0}21222120000^\omega$. And so in one
step we see that no avalanche can be triggered, hence the answer to
$AP$ is negative.
As a second example, consider the following sequence of height
differences (always with $p=3$):
$^\omega0\underline{3}122122221201200^\omega$. There are several possibilities for
avalanches from the left to the right but none of them arrives to the
0's region. So the answer to the decision problem is still negative. To get
an idea of what happens for a positive instance of the problem,
consider the following initial configuration: $^\omega0
\underline{3}12222100^\omega$ with parameter $p=3$.
\smallskip

The full proof of Theorem~\ref{th:1d} is a bit technical and will
be given in the journal version of the paper.

\begin{theorem}\label{th:1d}
AP is in $\NC^1$ for KSPM in dimension $1$ and $p>1$.
\end{theorem}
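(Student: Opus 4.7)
The plan is to reduce the avalanche triggered by a single added grain to the evaluation of a bounded-state finite-state process on the initial height-difference sequence, and then to solve that process in $\NC^1$ by a parallel-prefix product of transition matrices. First I would pass to height differences $h_i=x_i-x_{i+1}\in\{0,\dots,p-1\}$: adding one grain at column~$1$ becomes $h_1\mapsto h_1+1$, a toppling at site~$j$ (legal when $h_j\ge p$) updates $(h_{j-1},h_j,h_{j+p-1})$ to $(h_{j-1}+p-1,\,h_j-p,\,h_{j+p-1}+1)$, and the lattice structure of the orbit graph makes the total number $T_j\in\N$ of topplings at site~$j$ well defined. Since AP asks whether column~$k$ receives a grain, it is equivalent to deciding whether $T_j\ge 1$ for some $j\in[k-p+1,k-1]$.

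\smallskip\noindent Next I would prove the structural lemma $T_j\in\{0,1\}$ for every $j$. Fixing any serial schedule for the topplings (legitimate by confluence), consider the first instant at which some column~$j$ is about to topple for the second time; up to that instant every column has toppled at most once. For $p\ge 2$ the indicator $[j=1]$ vanishes unless $j=1$, in which case $T_{j-p+1}=T_{2-p}=0$ because the corresponding column lies outside the support; hence the total sand ever delivered to column~$j$ satisfies
\[
(p-1)T_{j+1}+T_{j-p+1}+[j=1]\;\le\;(p-1)+1\;=\;p.
\]
Column~$j$ has already shed exactly $p$ grains in its single earlier toppling, so its current value satisfies $h_j\le h_j^{\mathrm{init}}+p-p\le p-1$, contradicting $h_j\ge p$. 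Granted that $T_j$ is Boolean, the mass balance of the dynamics yields the bounded-stencil recurrence
\[
 T_j\;=\;\bigl[\,h_j+(p-1)T_{j+1}+T_{j-p+1}+[j=1]\;\ge\;p\,\bigr].
\]

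\smallskip\noindent The last step is to evaluate this recurrence in $\NC^1$. Its stencil has constant width $p$ but mixes a forward reference ($T_{j+1}$) with a backward one ($T_{j-p+1}$), so no plain sequential sweep suffices. I would convert it into the acceptance problem of a non-deterministic finite-state automaton of constant size (depending only on $p$) that scans $h_n h_{n-1}\cdots h_1$ from right to left. Its state records the most recent $T_j$ together with a window of $p-1$ pending guesses $(g_{j-1},\dots,g_{j-p+1})$ for the still-unread $T$-values to the left; each transition reads $h_{j-1}$, introduces a fresh guess $g_{j-p}$, evaluates the recurrence to obtain $T_{j-1}$, and rejects whenever the fresh value disagrees with the oldest outstanding guess. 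An $\O{1}$-bit flag records whether some $T_j=1$ has been produced while scanning the window $[k-p+1,k-1]$, and acceptance at the leftmost boundary forces the surviving guesses to vanish. Acceptance of such a constant-size NFA on a length-$n$ word is a product of $n$ constant-size transition matrices over the Boolean semiring; a balanced parallel-prefix evaluates this associative product in depth $\O{\log n}$ and polynomial size, placing AP in uniform $\NC^1$. The main obstacle is the bidirectional character of the recurrence, defused by the guess-and-verify technique: it compresses into $\O{p}$ bits of state exactly the information needed to postpone commitment to the unknown $T$-values until they are actually read.
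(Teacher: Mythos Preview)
Your first step---showing that every site topples at most once---is sound and is exactly the lemma the paper isolates. The gap lies in the second step. The identity
\[
T_j \;=\; \bigl[\,h_j+(p-1)T_{j+1}+T_{j-p+1}+[j{=}1]\ge p\,\bigr]
\]
is indeed satisfied by the true toppling vector, but it does \emph{not} determine that vector uniquely: viewed as a fixed-point equation on the Boolean lattice $\{0,1\}^n$, the actual toppling vector is only its \emph{least} solution, not its unique one. Your guess-and-verify NFA accepts whenever \emph{some} solution exists, so it will accept on spurious larger fixed points. Concretely, take $p=2$, $n=3$, $(h_1,h_2,h_3)=(0,1,1)$ (columns $(x_1,\dots,x_4)=(2,2,1,0)$) and $k=4$. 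Adding the grain yields $(h_1,h_2,h_3)=(1,1,1)$ and nothing fires, so the correct answer to AP is ``no''. Yet $(T_1,T_2,T_3)=(1,1,1)$ also satisfies your recurrence with $T_0=T_4=0$, and your automaton has an accepting run along this phantom solution (guesses $g_3=g_2=g_1=1$, $g_0=0$; every verification step passes and the flag is set at $T_3=1$), wrongly outputting ``yes''. The backward term $(p-1)T_{j+1}$ is precisely what allows such self-sustaining spurious avalanches to close up on themselves.

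The paper avoids this trap by never encoding the two-sided recurrence. From the single-toppling lemma it argues directly about how the \emph{rightmost} firing site is reached: for column~$k$ to receive a grain one must trace a chain of firings backward in steps of length $p-1$ down to a starting site in $\{1,\dots,p-1\}$, and every site on that chain must have initial height difference exactly $p-1$. This gives a purely forward, unambiguous criterion---essentially a conjunction of equality tests along an arithmetic progression---which a PRAM evaluates in depth $O(\log n)$. To salvage your route you would have to force the automaton to compute the \emph{minimal} fixed point; nondeterministic acceptance does the opposite (it existentially quantifies over fixed points), and I do not see how to enforce minimality with $O(1)$ state without first proving a one-directional characterization of the kind the paper uses.
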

\begin{proof}[Sketch of the proof.]
  The first step is to prove that, in this situation, the Kadanoff's
  rule can be applied only once at each site for any initial monotone
  stable configuration.  Using this result one can see that a site $k$
  such that $x_k=0$ in the initial configuration and $x_k>1$ in the
  final one, must have received grains from site $k-p$. This site, in
  its turn, must have received grains from $k-2p$ and so on until a
  ``firing'' site $i$ with $i\in[\![1,p-1]\!]$. The height difference for
  all of these sites must be $p-1$.  The existence of this sequence
  and the values of the height differences can be checked by a
  parallel iterative algorithm on a PRAM in time \O{\log n}.
\end{proof}

\section{Sandpiles and Kadanoff model in two dimensions}
\label{sec:2d}
There are several possibilities to define extensions of the Kadanoff
dynamics to two dimensional sandpiles. Let us first extend the basic
definitions introduced in section~\ref{sec:defs}.
\smallskip

A two-dimensional sandpile \emph{configuration} is a distribution of
grains of sand over the $\mathbb{N}\times\mathbb{N}$ lattice.  As in
the one-dimensional case, a configuration is \emph{finite} if only a
finite number of sites has non-zero sand content. Therefore, in the
sequel, a finite configuration on $\mathbb{N}\times\mathbb{N}$ will be
identified by a mapping from $\mathbb{N}\times\mathbb{N}$ into
$\mathbb{N}$, giving a number of grains of sand to every position in
the lattice. Thus, a configuration will be denoted by $x_{i,j}$ as
$(i,j)\mapsto\mathbb{N}$. A configuration $x$ is \emph{monotone} if
$\forall i,j\in\mathbb{N}\times\mathbb{N}$, $x_{i,j}$ is such that
$x_{i,j}\geq 0$ and $x_{i,j}\geq\max\{x_{i+1,j},x_{i,j+1}\}$. So we
have a monotone sandpile, in the same sense as in~\cite{duchi06}. A
configuration $x$ is \emph{horizontally stable}
(resp. \emph{vertically}) if $\forall
i,j\in\mathbb{N}\times\mathbb{N},\quad x_{i,j}-x_{i+1,j}<p$
(resp. $\forall i,j\in\mathbb{N}\times\mathbb{N},\quad
x_{i,j}-x_{i,j+1}<p$) and is \emph{stable} if it is both horizontally
and vertically stable. In other words, it is a generalisation of the
Kadanoff model in one dimension, that is the configuration is stable
if the difference between any two adjacent sites is less than the
Kadanoff parameter $p$. To this configuration, we apply the Kadanoff
dynamics for a given integer $p\geq 1$. The application can be done if
and only if the new configuration remains monotone.
Example~\ref{ex:kspm2d} illustrates the case which violates the
condition of monotonicity of the Kadanoff dynamics.
\begin{example}\label{ex:kspm2d}
Consider the initial configuration given in the bottom left matrix of
the following figure
  {\tiny\[\begin{array}{ccc}
    \begin{matrix}
      0&1&0&0\\
      \boxed{2}&3&0&0\\
      8&4&2&2\\
      8&4&3&2
    \end{matrix}&&\\
    &&\\
    \uparrow v&&\\
    &&\\
    \begin{matrix}
      0&0&0&0\\
      2&2&0&0\\
      8&\boxed{6}&2&2\\
      8&4&3&2
    \end{matrix}
    &\stackrel{h}{\rightarrow}&
    \begin{matrix}
      0&0&0&0\\
      2&2&0&0\\
      8&4&3&\boxed{3}\\
      8&4&3&2
    \end{matrix}\\
  \end{array}\]}
Values count for the number of grains of a site.  We see
that we cannot apply the Kadanoff's dynamics for a value of
parameter $p=3$ from the boxed site. Indeed, the resulting
configurations do not remain monotone neither by applying the
dynamics horizontally nor vertically (resp. $\uparrow\! v$ and
$\stackrel{h}{\rightarrow}$). A site which violates the condition
has been boxed in the resulting configurations (it might be not
unique).\qed
\end{example}
Recall that the Kadanoff operator applied to site $(i,j)$ for a
given $p$ consists in giving a grain of sand to any site in the
horizontal or vertical line, i.e $\{(i,j+1), ....(i,j+p-1)\}$ or
$\{(i+1,j), ...(i+p-1,j)\}$.

Similarly to the one-dimensional case, we associate to the previous
avalanches their height difference. Any configuration can be
identified by the mapping of its \emph{horizontal height difference}
(resp. vertical): $h_{\rightarrow}:(i,j)\mapsto x_{i,j}-x_{i+1,j}$
(resp. $h_{\uparrow}:(i,j)\mapsto x_{i,j}-x_{i,j+1}$). The height
difference allows to define the notions of monotonicity and stability
in a straightforward way. However, notice that when considering the
dynamics defined over height differences, we work with a different
lattice though isomorphic to the initial one. The relationship between
them is depicted on figure~\ref{fig:chenilles}.

For a better understanding of the dynamics,
recall that in one dimension an avalanche at site $i$ changes the
heights of sites $i-1$ and $i+p-1$. In two dimensions, there are
height changes on the line but also to both sides of it. The dynamics
is simpler to depict than to write it down formally. It will be presented
throughout examples and figures in the sequel.
An example of the Kadanoff's dynamics applied horizontally
(resp. vertically) is given in figure~\ref{fig:chenillesHVp=4}.  More
precisely, the Kadanoff's dynamics for a value of parameter $p=4$ is
depicted in figure~\ref{fig:Chenilles-p=4}. Observe that we do not
need to take into account the number of grains of sand in the
columns. It sufficies to take the graph of the edges adjacent to each
site (depicted by thick lines) and to store the height differences.
So, from now on, we will restrict ourselves to the lattice and to the
dynamics defined on the height differences.
In figure~\ref{fig:Chenilles-p=4}, we only keep the information
required for applying the dynamics in the simplified view.  In fact,
the local function is depicted by figure~\ref{fig:chenilles} that we
will call \emph{Chenilles} (horizontal and vertical, respectively).

Figure~\ref{fig:chenilles} explains how the dark site with coordinates
$(i,j)$ with a height difference of $p$ gives grains either
horizontally (figure~\ref{fig:chenilles} left) or vertically
(figure~\ref{fig:chenilles} right).
%
\begin{example}[Obtaining Bak's]
  In the case $p=2$ and if we assume the real sandpile is defined as
  in~\cite{duchi06}
  (i.e. $x_{i,j}\geq\max\{x_{i+1,j},x_{i,j+1}\}$), we get the templates
  from figure~\ref{fig:chenilleBak}.\qed
\end{example}

In order to be applied, the automaton's dynamics requires to test if
the local application gives us a non-negative configuration.
\subsection{\Pol-completeness}
Changing from dimension $1$ to $2$ (or greater), the statement of AP
has to be adapted. Consider a finite configuration $x$ which is
non-zero for sites $(i,j)$ with $i,j\geq0$, stable and monotone and
let $Q$ be the sum of the height differences. Let us denote by $n$ the
maximum index of non-zero height differences along both axis. Then,
SM$(n)$ denotes the set of monotone stable configurations of the form
given by a lower-triangular matrix of size $n\times n$. To generalise
the avalanche problem in two dimensions, we have to find a generic
position which is far enough from the initial sandpile but close
enough to be attained. To get rough bounds, we have followed the
following approach. For the upper bound, the worst case occurs when
all the grains are arranged on a single site (with a height difference
of $Q$) which is at an end of one of the axis and they fall down. For
the lower bound, it is the same reasoning, except that the pile
containing the grains is at the origin. Thus, we may restate our
decision problem as follows:

\myproblem{AP (dimension 2)}{A configuration $x\in$ SM$(n)$,
  $(k,\ell)\in\N\times\N$ such that $x_{k,\ell}=0$ and
  $\frac{\sqrt{2}}{2}n\leq\|(k,\ell)\|\leq n+Q$ (where $Q$ is the sum of the height
  differences).}{Does there exist an avalanche (obtained by using the
  vertical and horizontal chenilles) such that $x_{k,\ell}\geq 1$?}
where $\|.\|$ denotes the standard Euclidean norm.  \smallskip


To prove the \Pol-completeness of AP we will proceed by
reduction from the monotone circuit value problem (MCVP), i.e given a
circuit with $n$ inputs $\{\alpha_1, ...,\alpha_n\}$ and logic gates
AND, OR we want to answer if the output value is one or zero (refer
to~\cite{ghr95} for a detailed statement of the problem). NOT gates
are not allowed but the problem remains \Pol-complete for the
following reason: using De Morgan's laws $\overline{a\wedge
  b}=\overline{a}\vee\overline{b}$ and $\overline{a\vee
  b}=\overline{a}\wedge\overline{b}$, one can shift negation back
through the gates until they only affect the inputs themselves. For
the reduction, we have to construct, by using sandpile configurations,
wires (figure~\ref{fig:wire}), logic AND gates
(figure~\ref{fig:andgate}), logic OR gates (figure~\ref{fig:orgate}),
cross-overs (figure~\ref{fig:crossover}) and signal multipliers for
starting the process (figure~\ref{fig:sigmul}). We also need to define
a way to deterministically update the network; to do this, we can
apply the chenille's templates any way such that it is spatially
periodical, for instance from the left to the right and from the top
to the bottom. Our main result is thus:

\begin{theorem}
  $AP$ is \Pol-complete for KSPM in dimension two and any $p\geq 2$.
\end{theorem}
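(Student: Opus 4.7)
My plan is to prove both directions of the \Pol-completeness claim. Membership in \Pol is the easy direction: once the deterministic update policy (e.g. left-to-right, top-to-bottom application of the horizontal and vertical chenilles) is fixed, the sandpile evolves in a single well-defined trajectory. Since each toppling strictly decreases the potential $\sum_{i,j} x_{i,j}\cdot (i+j)$ by a bounded amount and $Q$ plus the input size bounds the reachable area, the whole avalanche terminates after a polynomial number of steps, and one can simulate it and check whether site $(k,\ell)$ ever becomes non-zero. So the heart of the proof is the \Pol-hardness direction, via a logspace reduction from MCVP.

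For the hardness reduction I would proceed by describing a library of gadgets — wires, AND gates, OR gates, crossovers and signal multipliers — each expressed as a region of stable monotone height-difference configuration, such that the ``signal'' corresponds to a particular propagating toppling front (a chain of chenilles) produced once a single extra grain is added at an input port. The first step is to verify, for the chosen parameter $p$, that the wire gadget propagates a signal along a prescribed direction and, absent an input grain, stays quiescent; this reduces to checking that exactly one sequence of horizontal/vertical chenilles can fire, and that the height differences are everywhere $<p$ except along the encoded signal path, where they equal $p-1$ precisely so that the addition of one grain triggers the chenille. The second step is to build the AND and OR templates by merging two wire ends into a junction: for OR, either incoming signal suffices to push the local height difference over $p$; for AND, each single incoming signal leaves the junction at difference exactly $p-1$ so one alone cannot fire, but both together bump it to $p$ and ignite the outgoing wire. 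These verifications are local and parameter-dependent but purely combinatorial.

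Two subtleties then have to be addressed. First, since the circuit may be non-planar, one needs a crossover gadget; I would construct one out of two AND/OR layers and signal multipliers, in the style of the standard planar simulation of non-planar monotone circuits, so that inputs can be duplicated and two signals can effectively pass through each other without interaction. The signal multiplier is also needed at the inputs themselves, so the single grain we are allowed to drop fans out into the $0/1$ assignments $\{\alpha_1,\dots,\alpha_n\}$: those inputs set to $1$ receive a copy of the initial firing, those set to $0$ are wired with a small height gap that prevents any chenille from igniting there. Second, I have to confirm that monotonicity is preserved throughout the composed configuration, both horizontally and vertically, under the chosen update order — this is the reason the construction is presented via height differences and templated chenilles, since the templates are already designed to preserve the monotonicity constraint $x_{i,j}\geq\max\{x_{i+1,j},x_{i,j+1}\}$.

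The main obstacle I expect is the crossover gadget together with the interaction between gadgets: each gadget behaves correctly in isolation, but when wires feed into gates, one must prevent unintended chenilles from firing in ``backwards'' or ``sideways'' directions and violating monotonicity, or leaking signals across neighbouring wires. Once the gadget library and their interconnection layout are fixed, the reduction itself is straightforward: given an MCVP instance, one places the gadgets on an $\mathcal{O}(\mathrm{poly}(|C|))$-sized grid in logspace, designates the output wire's terminus as the target site $(k,\ell)$, and drops one grain at the signal multiplier feeding the true-inputs. The avalanche then reaches $(k,\ell)$ if and only if the circuit evaluates to $1$, yielding \Pol-hardness and, together with membership, the theorem.
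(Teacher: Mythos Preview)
Your overall strategy matches the paper's: membership in \Pol via a termination bound, and hardness by a logspace reduction from MCVP using sandpile gadgets for wires, AND, OR, fan-out and crossing. Two points deserve correction, one minor and one substantive.

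The minor one: your potential $\sum_{i,j} x_{i,j}(i+j)$ \emph{increases} under a toppling, not decreases. A horizontal chenille at $(i,j)$ moves $p-1$ grains rightward, adding exactly $\binom{p}{2}$ to the sum; the argument still works because the potential is bounded above by the total number of grains times the maximal reachable coordinate, but the sign should be flipped. The paper simply cites Moore--Nilsson for membership in \Pol, so this direction is not the issue.

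The substantive gap is your crossover. You propose to synthesise it ``out of two AND/OR layers and signal multipliers, in the style of the standard planar simulation of non-planar monotone circuits''. There is no such standard construction: the classical planar crossover uses three XOR gates, and XOR is not monotone. With only AND, OR and duplication you can build a sorter (outputting $a\wedge b$ and $a\vee b$), but not a genuine crossover that sends $a$ and $b$ to separate outputs unchanged; indeed, planar \emph{monotone} CVP is not known to be \Pol-complete precisely because of this obstruction. The paper's whole point---echoing Moore's remark that one must ``find a clever embedding of non-planar Boolean circuits''---is that the crossover is realised \emph{directly} as a sandpile gadget at the level of chenilles (their Figure for crossing two wires), exploiting the two available toppling directions so that a horizontal and a vertical signal pass through a shared region without interacting. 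Your plan needs to replace the monotone-gate crossover idea with such a direct geometric gadget; once that is done, the rest of your outline (layout in logspace, single grain at the multiplier feeding the true inputs, target site at the output wire) is exactly what the paper does.
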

\begin{proof}
The fact that our problem is in \Pol is already known since C.~Moore
and M.~Nilsson paper~\cite{moore99}. The proof is done by proving that
the total number of avalanches required to relax a sandpile is
polynomial in the system size. The remaining open problem in their
study was the case $d=2$ for which they wrote ``\emph{The reader may
  [...] find a clever embedding of non-planar Boolean circuits}'',
which is precisely what will be done hereafter.  For the
reduction, one has to take an arbitrary instance of (MCVP) and to
build an initial configuration of a sandpile for the Kadanoff's
dynamics for $p=2$ (or greater). Remark that, in the case $p=2$, KSPM corresponds to Bak's model~\cite{bak88} in two dimensions
with a sandpile such that $x_{i,j}\geq\max\{x_{i+1,j},x_{i,j+1}\}$.
To complete the proof, we have to design:
\begin{itemize}
\item a wire (figure~\ref{fig:wire});
\item the crossing of information (figure~\ref{fig:crossover});
\item a AND gate (figure~\ref{fig:andgate});
\item a OR gate (figure~\ref{fig:orgate});
\item a signal multiplier (figure~\ref{fig:sigmul}).
\end{itemize}
The construction is shown graphically for $p=2$ but can be done for
greater values. For $p=2$, the horizontal and vertical chenilles are
given in figure~\ref{fig:chenilleBak}. According
to~\cite{goldschlager77}, the reduction is in \textsf{NC} since MCVP
is logspace complete for \Pol.  Recall that the decision problem only
adds a sand grain to one site, say $(0,0)$. To construct the
entry vector to an arbitrary circuit we have to construct from the
starting site wires to simulate any variable $\alpha_i=1$. (If
$\alpha_i=0$ nothing is done: we do not construct a wire from the
initial site. Else, there will be a wire to simulate the value 1).
\end{proof}
\begin{remark}
  For $p\geq 3$, the construction of the AND gate is easier than for
  $p=2$. The dynamics is obtained from figure~\ref{fig:chenilles} and
  the construction of an AND gate is depicted on
  figure~\ref{fig:and-p=3}.
\end{remark}

\section{Conclusion and future work}\label{sec:ccl}

We have proved that the avalanche problem for the KSPM model in two
dimensions is \Pol-complete with a sandpile defined as
in~\cite{duchi06} and for every value of the parameter $p$. Let us
also point out that in the case where $p=2$, this model corresponds to
the two dimensional Bak's model with a pile such that $x_{i,j}\geq 0$
and $x_{i,j}\geq\max\{x_{i+1,j},x_{i,j+1}\}$. In this context, we also
proved that this physical version (with a two dimension sandpile
interpretation) is \Pol-complete. It is important to notice that, by
directly taking the two dimensional Bak's tokens game (given a graph
such that a vertex has a number of token greater or equal than its
degree, it gives one token to each of its neighbors), its computation
universality was proved in~\cite{gm2} by designing logical gates in
non-planar graphs. Furthermore, by using the previous construction,
C. Moore \etal proved the \Pol-completeness of this problem for
lattices of dimensions $d$ with $d\geq 3$. But the problem remained
open for two dimensional lattices. Furthermore, it was proved
in~\cite{goles06} that, in the above situation, it is not possible to
build circuits because the information is impossible to cross. The two
dimensional Bak's operator corresponds, in our framework, to the
application of the four rotations of the template (see
figure~\ref{fig:BakRotate}). But this model is not anymore the
representation of a two dimensional sandpile as presented
in~\cite{duchi06}, that is with $x_{i,j}\geq 0$ and
$x_{i,j}\geq\max\{x_{i+1,j},x_{i,j+1}\}$.

To define a reasonable two dimensional model, consider a monotone
sandpile decreasing for $i\geq0$ and $j\geq0$. Over this pile we
define the extended Kadanoff's model as a local avalanche in the
growing direction of the $i-j$ axis such that monotonicity is
allowed. Certainly, one may define other local applications of
Kadanoff's rule which also match with the physical sense of
monotonicity. For instance, by considering the set
${(i+1,j),(i+1,j+1),(i,j+1)}$ as the sites to be able to receive
grains from site $(i,j)$.  In this sense it is interesting to remark
that the two dimensional sandpile defined by Bak (i.e for nearest
neighbors, also called the von Neumann neihborhood, a site gives a
token to each of its four neighbors if and only if it has enough
tokens) can be seen as the application of the Kadanoff rule for $p=2$
by applying to a site, if there are at least four tokens, the
horizontal $(\rightarrow)$ and the vertical $(\downarrow)$ chenille
simultaneoulsly (see figure~\ref{fig:BakRotate}).  Similarily, for an
arbitrary $p$, one may simultaneously apply other conbinations of
chenilles which, in general, allows us to get \Pol-complete
problems. For instance, when there are enough tokens, the applications
of the four chenilles (i.e. $\leftarrow$,$\rightarrow$,$\uparrow$ and
$\downarrow$) gives raise to a new family of local templates called
\emph{butterflies} (because of their four wings). It is not so
difficult to construct wires and circuits for butterflies. Hence, for this
model of sandpiles, the decison problem will remain \Pol-complete.
One thing to analyze from an algebraic and complexity point of view is
to classify every local rule derivated from the chenille
application. Further, one may define a more general sandpile dynamics
which contains both Bak's and Kadanoff's ones: i.e given an integer $p
\geq2$, we allow the application of every Kadanoff's update for $q\leq
p$. We are studying this dynamics and, as a first result, we observe
yet that in one dimension there are several fixed points and also,
given a monotone circuit with depth $m$ and with $n$ gates, we may
simulate it on a line with this generalized rule for a given $p\geq
m+n$.

For the one-dimensional avalanche problem as defined in
section~\ref{sec:defs}, it can be proved that it belongs tho the class
\NC for $p=2$ and that it remains in the same class when the first $p$
columns contain more than one grain (\ie that there is no hole in the
pile). We are in the way to prove the same in the general case.


\section*{Acknowledgements}

We thank Pr.~Enrico Formenti for helpful discussions and comments
while Pr.~Eric Goles was visiting Nice and writing this paper.

\bibliographystyle{plain}
\bibliography{KSPM.bib}
\newpage
\begin{figure}
  \centering
  \includegraphics[scale=.45]{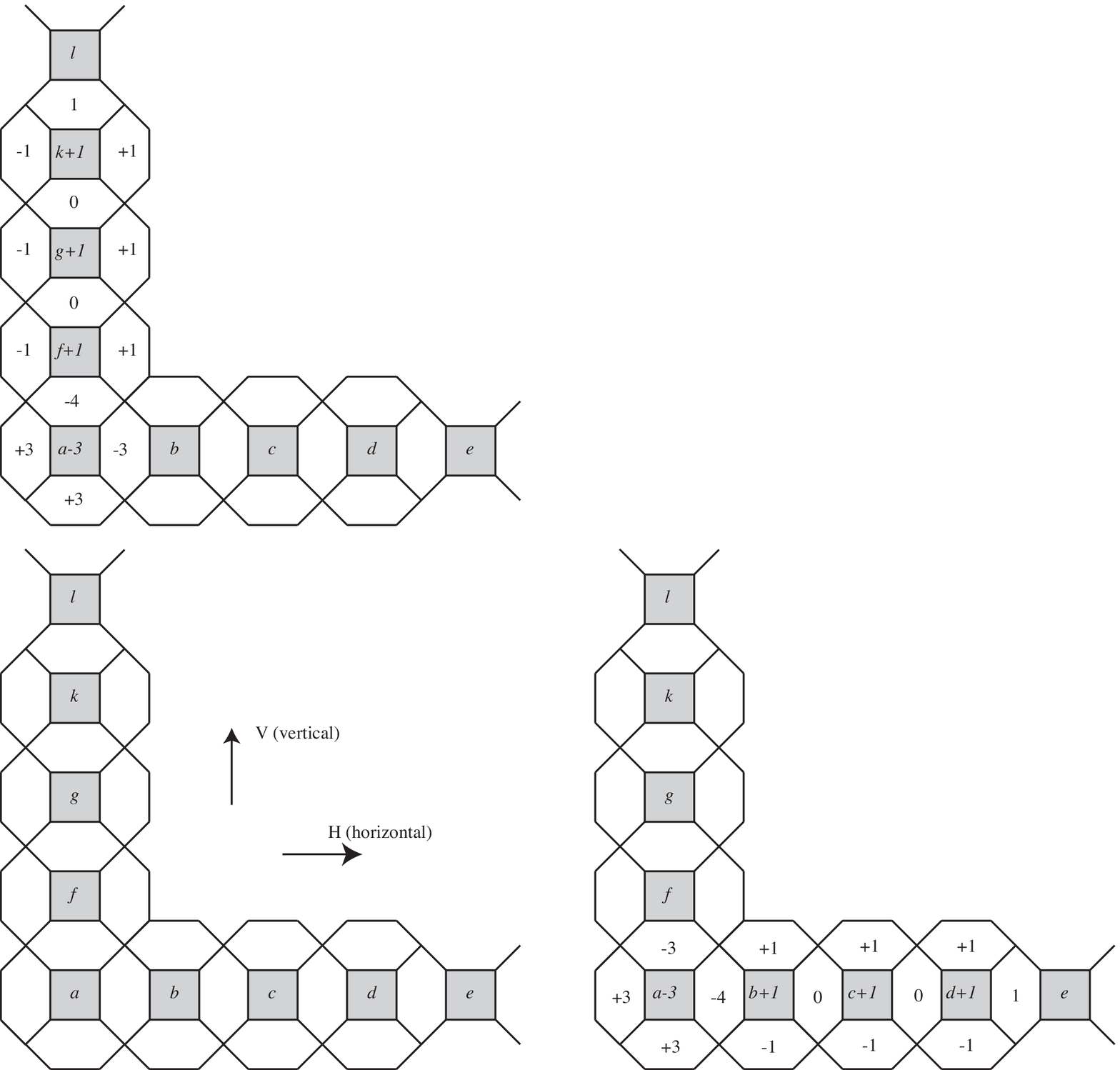}
  \caption{Horizontal and vertical chenilles for $p=4$. Shaded squares
    count the number of grains on each column and the hexagons between
    the squares the height difference between the corresponding two
    adjacent columns. The initial configuration is on the
    bottom-left. The Kadanoff's dynamics is applied from the shaded
    site labelled $a$ horizontally or vertically (resp. $\uparrow\! V$
    and $\stackrel{H}{\rightarrow}$) to get the resulting
    configurations.}
  \label{fig:chenillesHVp=4}
\end{figure}

\begin{figure}
  \centering
  \includegraphics[scale=.5]{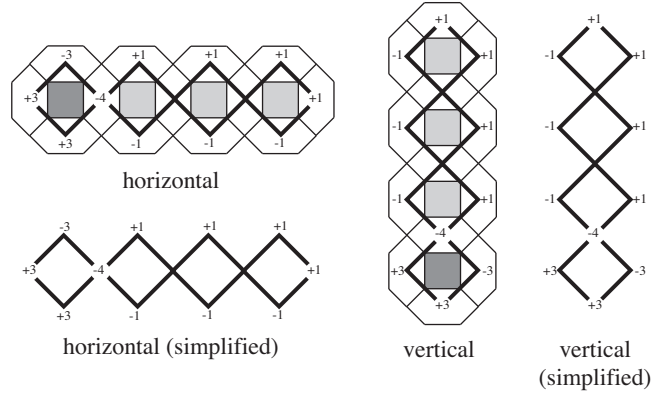}
  \caption{Horizontal and vertical chenilles for $p=4$.  The dynamics
    is applied to the dark-shaded site (the leftmost one on the left
    part of the figure and the lowest one on the right part of the
    figure). The numbers express the height differences after the
    application of the dynamics. The two simplified views remove the
    number of sand grains information and only keeps the height
    difference information. It corresponds to a change in the lattice
    structure if the grains are considered or the height difference.}
    \label{fig:Chenilles-p=4}
\end{figure}
\begin{figure}
 \centering
\includegraphics[scale=.55]{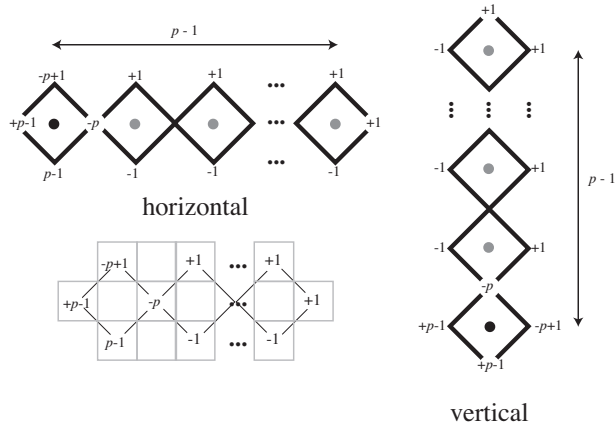}
\caption{Horizontal resp. vertical (on the left resp. on the right)
  chenilles in the $\N\times\N$ lattice for arbitrary parameter $p\geq
  2$. The site $(i,j)$ --denoted by a black bullet-- gives one grain
  of sand to the site $(i+p-1,j)$ horizontally resp. one grain of sand
  to the site $(i,j+p-1)$ vertically. The figure gives the height
  differences of this dynamics and the change of the lattice structure
  between the dynamics on the grains and the corresponding dynamics on
  the height difference. In the sequel, we will adopt the
  representation on the left and on the bottom for defining the templates.}
  \label{fig:chenilles}
\end{figure}

  \begin{figure}
    \centering
    \includegraphics[scale=.65]{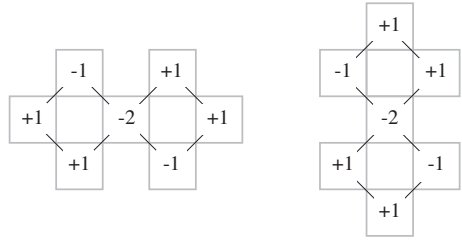}
    \caption{Templates for Bak's dynamics with $p=2$.}
    \label{fig:chenilleBak}
  \end{figure}

  \begin{figure}
    \centering
    \includegraphics[scale=.5]{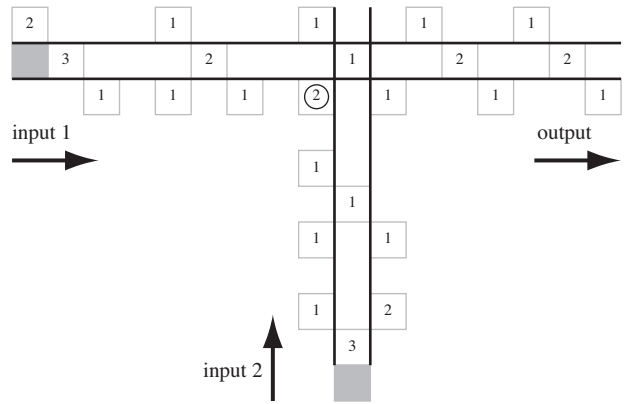}
    \caption{The logic AND gate for two inputs for $p=3$. The circled ``2''
      is put in order to get enough tokens for the horizontal and
      vertical inputs.}
    \label{fig:and-p=3}
  \end{figure}

\begin{figure}
  \centering
  \includegraphics[scale=.55]{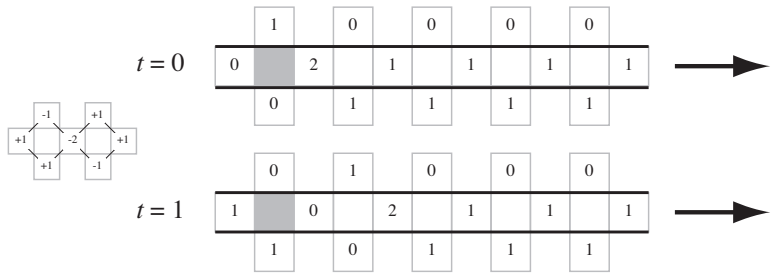}
  \caption{Information propagation in a wire for $p=2$ at times $t=0$
    and $t=1$ using the templates for Baks dynamics (recalled on the
    left of the figure).}
  \label{fig:wire}
\end{figure}
\begin{figure}
  \centering
  \includegraphics[scale=.55]{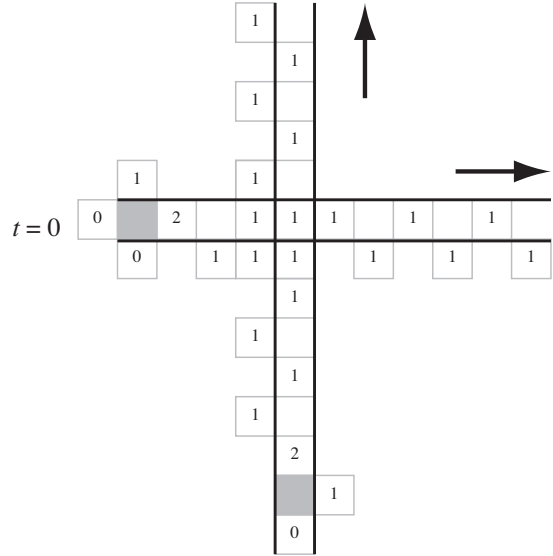}
  \caption{Crossing over two wires for $p=2$; arrows show the
    directions of propagation.}
  \label{fig:crossover}
\end{figure}
\begin{figure}
  \centering
  \includegraphics[scale=.55]{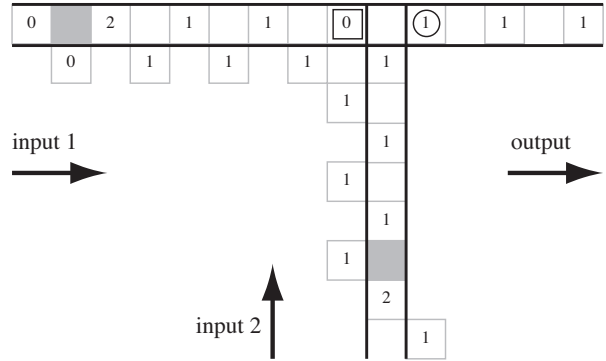}
  \caption{A logic AND gate with two inputs for $p=2$. The upcoming ``2''
    has to reach the horizontal ``2'' to change the value of the boxed
  ``0'' to ``1''. Then, the upcoming ``2'' can apply the vertical chenille
  template and changes the circled ``1'' into ``2''. In other words,
  the AND is computed by applying 3 horizontal chenilles and 4
  vertical ones.}
  \label{fig:andgate}
\end{figure}
\begin{figure}
  \centering
  \includegraphics[scale=.55]{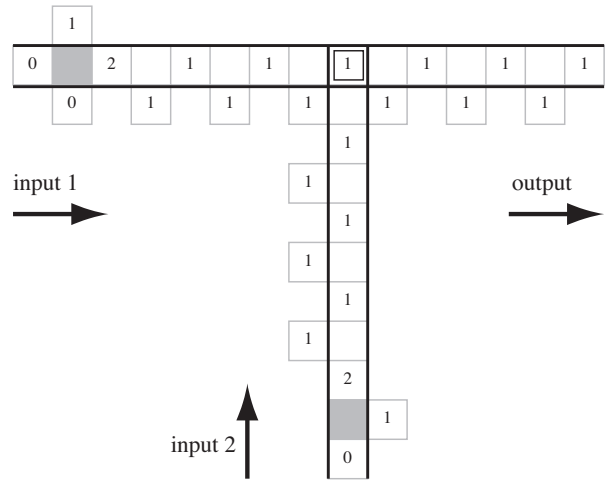}
  \caption{A logic OR gate with two inputs for $p=2$. The boxed cell
    indicates the OR gate point of computation.}
  \label{fig:orgate}
\end{figure}
\begin{figure}
  \centering
  \includegraphics[scale=.55]{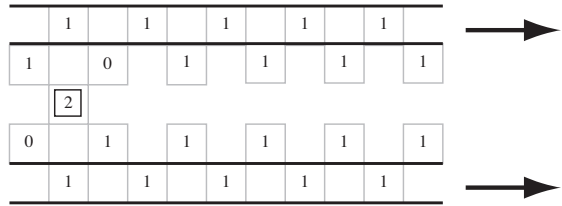}
  \caption{A signal multiplier for $p=2$. The signal starts on boxed
    site with value 2 and the first vertical chenille ruled by the
    Kadanoff's dynamics multiplies the signal on both horizontal wires.}
  \label{fig:sigmul}
\end{figure}

\begin{figure}
  \centering
  \includegraphics[scale=.45]{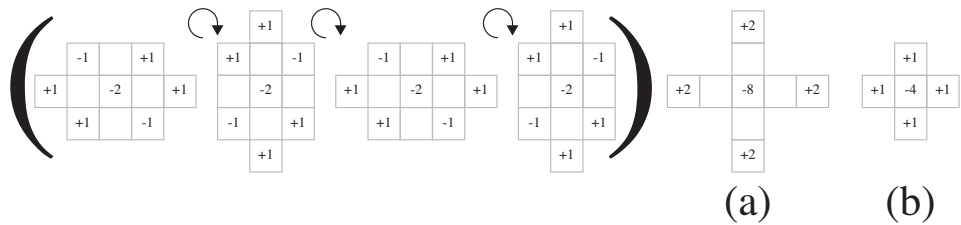}
  \caption{From Bak's to Kadanoff's operators. All the Kadanoff's
    operators between the brackets have been applied to get the
    pattern (a). The Bak's pattern (b) is obtained by eliminating the
    holes in (a) and by dividing the number of tokens by two.}
  \label{fig:BakRotate}
\end{figure}

\end{document}